\theoremstyle{plain}
\newtheorem{definition}{Definition}
\newtheorem{proposition}{Proposition}
\numberwithin{equation}{section}
\author[N. Demni]{Nizar Demni}
\address{IRMAR, Universit\'e de Rennes 1\\ Campus de
Beaulieu\\ 35042 Rennes cedex\\ France}
\email{nizar.demni@univ-rennes1.fr}
\author[Z. Mouayn]{Zouhair Mouayn}
\address{Department of Mathematics\\ Faculty of Sciences and Technics (M'Ghila)\\ Sultan Moulay Slimane \\ PO. Box 523, B\'eni Mellal\\ Morocco}
\email{mouayn@fstbm.ac.ma}
\author[H. Yaqine]{Houda Yaqine}
\address{Department of Mathematics\\ Faculty of Sciences and Technics (M'Ghila)\\ Sultan Moulay Slimane \\ PO. Box 523, B\'eni Mellal\\ Morocco}
\email{yaqinehou@gmail.com}
\title{Berezin transforms attached to Landau levels on the complex projective space $\mathbb{CP}^n$}
\begin{document}

\maketitle

\begin{abstract}
In this paper, we construct coherent states for each generalized Bergman space on the $n$-dimensional complex projective space in order to apply a coherent states quantization method. Doing so allows to define the Berezin transform for these spaces. In particular, we provide a variational formula for this transform by means of the Fubini-Study Laplace operator which reduces when $n = 1$ and for the lowest spherical Landau levels to the well-known formula previously given by Berezin himself.
\end{abstract}

\section{Introduction} 
The Berezin transform was introduced and studied in \cite{Ber1}, \cite{Ber2}, for classical complex symmetric spaces and in \cite{Ber-Cob}, \cite{Eng}, \cite{HKZ} for the Bergman, Hardy and Bargman-Fock spaces. This is a quite relevant transform relating the covariant and the contravariant symbols of a bounded linear operator which is also useful in quantization theory and in the correspondence principle. In \cite{Ber1}, the author used the correspondence principle to express this transform in the sphere and in the Lobachevsky plane through their corresponding Laplace-Beltrami operators.
More generally, the Berezin transform stemming from systems of coherent states attached to generalized Bergman spaces on $\mathbb{C}^n, n \geq 1$ and the hyperbolic complex unit ball $\mathbb{B}^n$ were introduced in \cite{Mou1} and \cite{Gha-Mou} respectively. There, these spaces arise as eigenspaces of a Schr\"odinger operator with a uniform magnetic field corresponding to Euclidean and hyperbolic Landau levels and generalizations of Berezin formula were given (see also \cite{Pee} and for Berezin formulas on $\mathbb{B}^n$ and $\mathbb{C}^n$). Yet another expression for the Berezin transform on $\mathbb{B}^n$ in the magnetic setting were derived in \cite{Bou-Mou} using the Fourier-Helgason transform and involves the Wilson polynomials (\cite{AAR}.
In this paper, we are interested in the $n$-dimensional complex projective space $\mathbb{CP}^n, n \geq 1$ model endowed with its Fubini-Study metric for which we construct a system of coherent states. Actually, the latter are attached to eigenspaces of a Schr\"odinger operator with a uniform magnetic field in $\mathbb{CP}^n$ which we also refer to as generalized Bergman spaces. These coherent states are then used to define the Berezin transform in this magnetic setting, for which we provide a variational formula by means of the Fubini-Study Laplace operator. In particular, we recover the original formula of Berezin for the complex projective line which corresponds here to the lowest spherical Landau level. Note that these findings complete the aformentionned analyses of magnetic Berezin transforms on phase spaces modeled by rank-one complex symmetric spaces. Moreover, the existence of such variational formulas is ensured by the fact that the algebra of bi-invariant operators on these phase spaces are generated by their Laplace-Beltrami operators since the Berezin transforms commute with translation operators of the underlying Lie groups (\cite{Ber2}). At the physical level, it is worth noting that the variational formula is represented through the Fubini-Study Laplace operator which describes the motion of a free particle. Therefore, our strategy leading to this formula transfers the effect of the magnetic field at a higher spherical Landau level only the representing function. As a matter of fact, it sheds the light on the interplay between the geometry of the phase space on the one hand and the physical quantities on the other hand. This phenomenon has the same flavor as diamagnetic inequalities (see e.g. \cite{Sim}).
The paper is organized as follows. In the next section, we outline the formalism of coherent states quantization. In the third section, we recall from \cite{Haf-Int} the spectral theory of the magnetic Schr\"odinger operator on $\mathbb{CP}^n$ giving rise to the generalized Bergman spaces. In particular, we also exhibit the spectral function of the Fubini-Study Laplacian. In the fourth section, we attach a system of coherent states to each eigenspace and introduce the corresponding Berezin transform. The variational formula is then derived in the fifth section, whence we recover Berezin formula on the Riemann sphere.

\section{Coherent states quantization}
In this section, we recall the formalism of coherent states quantization and refer the reader to the standard monograph \cite{Gaz} (p.72-75).  
 Let $(X, \mu)$ be a measurable space and denote $L^{2}(X,\mu)$ the space of $\mu$-square integrable functions on $X$. Let $\mathcal{A} \subset L^{2}(X,d\mu )$ be a closed subspace (possibly infinite-dimensional) with an orthonormal basis $\left\{ \Phi _{j}\right\} _{j=0}^{\infty }$ and let $(\mathcal{H}, \langle \,\mid \,\rangle)$ be a infinite-dimensional separable Hilbert space equipped with an orthonormal basis $\left\{ \phi _{j}\right\} _{j=0}^{\infty }$. Then the coherent states $\left\{ \mid x>\right\} _{x\in X}$ in $\mathcal{H}$ are defined by
\begin{equation}\label{CohSta}
\mid x>:=\left( \mathcal{N}\left( x\right) \right) ^{-\frac{1}{2}} \sum_{j=0}^{+\infty }\Phi _{j}\left( x\right) \mid \phi _{j}>\quad 
\end{equation}
where
\begin{equation}\label{Norm}
\mathcal{N}\left( x\right) :=\sum_{j=0}^{+\infty }\Phi _{j}\left( x\right) \overline{\Phi _{j}\left( x\right) }
\end{equation}
is the normalization factor such that $\left\langle x\mid x\right\rangle _{\mathcal{H}}=1$. These states provide the following resolution of the identity operator:
\begin{equation}\label{Res}
\mathbf{1}_{\mathcal{H}}=\int\limits_{X}\mid x><x\mid \mathcal{N}\left(
x\right) d\mu \left( x\right) 
\end{equation}
where $\mid x><x\mid$ is the Dirac bra-ket notation for the rank-one operator $\varphi \mapsto \langle x \mid \varphi \rangle  x$. Note in passing that the choice of the Hilbert space $\mathcal{H}$ defines a quantization of the space $X$ by the coherent states defined by \eqref{CohSta} via the inclusion map $X\ni x\mapsto \mid x>\in \mathcal{H}$. In this respect, the property \eqref{Res} bridges between classical and quantum mechanics. In fact, the (Klauder-Berezin) coherent states quantization consists in associating to a classical observable (that is a function on $X$ with specific properties) the operator-valued integral: 
\begin{equation*}
A_f := \int_X \mid x> <x\mid f(x) \mathcal{N}(x)d\mu(x). 
\end{equation*}
The map $f \mapsto A_f$ is not one-to-one in general and for any given operator $A$ on $\mathcal{H}$, we call any function $f$ such that $A = A_f$ its upper or contravariant symbol. On the other hand, the mean value $\langle x \mid A \mid x \rangle$ of $A$ with respect to the coherent state $\mid x>$ is referred to as the lower or covariant symbol of $A$. Consequently, we can associate to a classical observable $f$ the expectation $\langle x \mid A_f \mid x \rangle$ and as such, we get the Berezin transform of $f$ defined by: 
\begin{equation*}
B[f](x) := \langle x \mid A_f \mid x \rangle, \quad x \in X. 
\end{equation*}

\section{Generalized Bergman spaces on $\mathbb{CP}^n$}
Below, we recall from \cite{Haf-Int} the spectral theory of the magnetic Schr\"odinger operator $\Delta_{\nu}$ in $\mathbb{CP}^n, n \geq 1$:
\begin{equation}\label{Oper}
\Delta_{\nu}:=4(1+|z|^2)\left(\sum_{i,j}^n(\delta_{ij}+z_i\bar{z_j})\frac{\partial^2}{\partial z_i \partial\bar{z_j}}+\nu\sum_{j=1}^n \left(z_j\frac{\partial}{\partial z_j}-\bar{z_j}\frac{\partial}{\partial \bar{z_j}}\right)-\nu^2\right)+4\nu^2,
\end{equation}
where $2\nu\in\mathbb{Z}^+$ (for sake of simplicity, we ommit the dependence of this operator on $n$). Namely, for $\lambda\in\mathbb{C}$, we consider the equation
\begin{equation}\label{3.1}
\Delta_{\nu}F(z)=\left(\lambda^2-n^2+4\nu^2\right)F(z),
\end{equation}
where $F$ is a bounded function on $\mathbb{C}^n$, and define
\begin{equation}\label{3.2}
A^{\nu}_{m}(\mathbb{CP}^n) := \{F:\mathbb{C}^n\to\mathbb{C},\ F\text{ is a bounded and }\Delta_{\nu}(z)=(\lambda^2-n^2+4\nu)F(z)\}.
\end{equation}
In order to recall the description of these eigenspaces, we need to fix some notations. For $p,q\in\mathbb{Z}_{+}$, let $H(p,q)$ denote the finite-dimensional space of all harmonic homogeneous polynomials in $\mathbb{C}^n$ which are 
of degree $p$ in $z$ and of degree $q$ in $\bar{z}$ and let $\mathscr{H}(p,q)$ be the space of spherical harmonics in $H(p,q)$. Then, the spaces $\mathscr{H}(p,q)$ are pairwise orthogonal in $L^2(S^{2n-1},dw)$ and the dimension of $\mathscr{H}(p,q)$ is given by 
\begin{equation}\label{3.3}
d(n,p,q)=\frac{(p+q+n-1)(p+n-2)!(q+n-2)!}{p!q!(n-1)!(n-2)!}.
\end{equation}
Pick any orthogonal basis $(h_{p,q})$ of $\mathscr{H}(p,q)$ and introduce the set 
\begin{equation}\label{3.4}
D_{\nu} := \{\lambda\in\mathbb{C},\ \frac{n\pm \lambda}{2}+\nu\in\mathbb{Z}_{-}\}\cup \{\lambda\in\mathbb{C},\ \frac{n\pm \lambda}{2}-\nu\in\mathbb{Z}_{-}\}.
\end{equation}
From \cite{Haf-Int}, p.147, the eigenspace  $A^{\nu}_{m}(\mathbb{CP}^n)=\{0\}$ if $\lambda\not\in D_{\nu}$, otherwise it is not trivial if and only if $\lambda$ has the form $\lambda=\pm (2(m+\nu)+n)$ for some $m\in\mathbb{Z}_{+}$. When this condition is fulfilled, any function $F(z)$ in $A^{\nu}_{m}(\mathbb{CP}^n)$ admits the expansion:
\begin{equation}\label{3.5}
F(z)=\left(1+|z|^2\right)^{-(m+\nu)}\sum_{\mathclap{\substack{0\leq p\leq m\\ 0\leq q\leq m+2\nu}}} \text{ }_2F_1\left(p-m,q-m-2\nu,n+p+q;-|z|^2\right)h_{p,q}(z,\bar{z}),
\end{equation}
where $_2F_1(a,b,c;x)$ is the Gauss hypergeometric function (\cite{AAR}) and $F$ satisfies the growth condition
\begin{equation}\label{3.6}
\lim_{r\to\infty}F(r\omega)=\sum_{0\leq p\leq m} (-1)^{m-p}\frac{\Gamma(m-p+1)\Gamma(n+2p+2\nu)}{\Gamma(m+n+p+2\nu)}h_{p,p+2\nu}(\omega,\bar{\omega}), \quad z=rw,\ r>0, \,\omega \in S^{2n-1}.
\end{equation}
Moreover, $A^{\nu}_{m}(\mathbb{CP}^n)$ is finite-dimensional and its dimension is given by
\begin{equation}\label{3.7}
(2m+n+2\nu)\frac{\Gamma(m+n)\Gamma(m+n+2\nu)}{n \Gamma^2(n)\Gamma(m+1)\Gamma(m+2\nu+1)}
\end{equation}
Let 
\begin{equation}\label{1.8}
d\mu_n(w)=\frac{1}{(1+|w|^2)^{n+1}} dw
\end{equation}
where $dw$ is the Lebesgue measure on $\mathbb{C}^n$, then $A^{\nu}_{m}(\mathbb{CP}^n)$ admits a reproducing kernel given by
\begin{equation}\label{3.10}
K_m^{\nu}(z,w) := c^{\nu,n}_{m}\left[\cos d_{FS}(z,w)\right]^{2\nu}P_m^{(n-1,2\nu)}(\cos 2d_{FS}(z,w)),
\end{equation}
where $d_{FS}(z,w)$ is the Fubini-Study distance:
\begin{equation}\label{3.11}
\cos^2d_{FS}(z,w)=\frac{|1+ \langle z,w \rangle|^2}{(1+|z|^2)(1+|w|^2)}
\end{equation}
and 
\begin{equation}\label{Const}
c^{\nu,n}_{m} := \frac{(2m+2\nu+n)\Gamma(m+n+2\nu)}{\pi^n\Gamma(m+2\nu+1)}. 
\end{equation}

When $\nu=0$,  the operator $\Delta_{\nu}$ in \eqref{Oper} reduces to the Fubini-Study Laplacian:
\begin{equation}\label{2.5}
\Delta_0=4(1+|z|^2)\sum_{i,j}^n(\delta_{ij}+z_i\bar{z}_j)\frac{\partial^2}{\partial z_i\partial\bar{z}_j}
\end{equation}
which has a discrete spectral decomposition with eigenvalues $(-4k(k+n))_{k \geq 0}$. Besides, each eigenspace is finite-dimensional and has a orthonormal basis given by homogeneous spherical harmonics of degree zero. 
In particular, the kernel of the orthogonal projection from the $L^2(\mathbb{C}P^n, d_{FS})$ onto the $k$-th eigenspace reads:
\begin{equation}\label{SpecFun}
\psi_n(k;z,w):= \frac{(2k+n)\Gamma(n+k)}{\pi^n k!} \frac{P_k^{n-1, 0}(\cos 2d_{FS}(z,w))}{P_k^{n-1, 0}(1)}
\end{equation}
Thus, the spectral Theorem implies that for any suitable function $u$, the operator $u(-\Delta_{\textit{FS}})$ is an integral operator whose kernel is given by: 
\begin{equation}\label{Spect}
\sum_{k \geq 0} u(4k(k+n))\psi_n(k;z,w) .
\end{equation}

\section{Berezin transforms attached to spherical Landau levels}
Now, we specialize the definition \eqref{CohSta} of coherent states to the generalized Bergman space $A_m^{\nu}(\mathbb{CP}^n)$ by taking: 
\begin{itemize}
\item $X = \mathbb{CP}^n$ endowed with $d\mu_n$. 
\item $x \rightarrow z \in \mathbb{CP}^n$.
\item $\mathcal{A} \rightarrow A_m^{\nu}(\mathbb{CP}^n)$.    
\item A orthonormal basis $(\Phi_{p,q,j}^{\nu,m})_{p,q,j}$ of $A_m^{\nu}(\mathbb{CP}^n)$ where $1 \leq j \leq d(n,p,q)$ and $0 \leq q \leq m+2, 0 \leq p \leq m$. 
\item The Hilbert space $\mathscr{H}$ carrying the quantum states of some physical system and its basis $(\phi_{p,q,j})_{p,q,j}$ will be specified when needed.  
\end{itemize}
With these data, we define the following: 
\begin{definition}
For any $n \geq 1, 2\nu \in \mathbb{Z}_+$ and any $m \in \mathbb{Z}_+$, the system $\mid z; \nu, m >$ of coherent states attached to $A_m^{\nu}(\mathbb{CP}^n)$ is defined by 
\begin{equation*}
\mid z; \nu, m > := (\mathcal{N}^{\nu,m}(z))^{-1/2}\sum_{\mathclap{\substack{0\leq p\leq m\\ 0 \leq q \leq m+2\nu \\ 1 \leq j \leq d(n,p,q)}}} \overline{\Phi_{p,q,j}^{\nu,m}(z)} \mid \phi_{p,q,j} >
\end{equation*}
where $\mathcal{N}^{\nu,m}(z)$ is the normalizing factor given in \eqref{Norm}. 
\end{definition}
Observe that $\mathcal{N}^{\nu,m}(z)$ is the diagonal of the reproducing kernel $K_m^{\nu}(z,w)$ in \eqref{3.10}: 
\begin{equation*}
\mathcal{N}^{\nu,m}(z) = \sum_{\mathclap{\substack{0\leq p\leq m\\ 0 \leq q \leq m+2\nu \\ 1 \leq j \leq d(n,p,q)}}} \overline{\Phi_{p,q,j}^{\nu,m}(z)}\Phi_{p,q,j}^{\nu,m}(z) = K_m^{\nu}(z,z)
\end{equation*}
so that 
\begin{equation*}
\mathcal{N}^{\nu,m}(z) = \frac{(2m+2\nu+n)\Gamma(m+n+2\nu)}{\pi^n\Gamma(m+2\nu+1)}P_m^{(n-1,2\nu)}(1).
\end{equation*}
Using the special value (\cite{AAR}):
\begin{equation}\label{JacVal}
P_m^{(n-1,2\nu)}(1) = \frac{(n)_m}{m!},
\end{equation}
we readily get 
\begin{equation*}
\mathcal{N}^{\nu,m}(z) = \frac{(2m+2\nu+n)\Gamma(m+n+2\nu)}{\pi^{n}\Gamma(m+2\nu+1)}\frac{(n)_m}{m!}.
\end{equation*}
The states defined above satisfy the resolution of the identity 
\begin{equation*}
{\bf 1}_{\mathscr{H}} = \int_{\mathbb{CP}^n} \mid z; \nu, m > \, < z; \nu, m \mid \mathcal{N}^{\nu,m}(z)d\mu_n(z)
\end{equation*}
and allow for the quantization scheme described in section 2. As a matter of fact, we can define the Berezin transform of a classical observable $f$ in the usual way: we first associate to $f$ the operator-valued integral 
\begin{equation*}
A_f = \int_{\mathbb{CP}^n} \mid w; \nu, m > \, < w; \nu, m \mid f(w) \mathcal{N}^{\nu,m}(w)d\mu_n(w)
\end{equation*}
then take the expectation $\langle z; \nu, m  \mid A_f \mid  z; \nu, m  \rangle$ with respect to the coherent state $\mid z; \nu, m >$. 
\begin{definition}
The Berezin transform $B_m^{\nu}[f]$ of $f \in L^{\infty}(\mathbb{CP}^n, d\mu_n),$ attached to the generalized Bergman space $A_m^{\nu}(\mathbb{CP}^n)$ is defined by 
\begin{equation}\label{1.7}
\mathcal{B}_m[f](z)= \frac{c^{\nu,n}_{m}}{P^{(n-1,2\nu)}_{m}(1)} \int\displaylimits_{\mathbb{C}^n}\left(\cos^2d_{FS}(z,w)\right)^{2\nu}\left(P^{(n-1,2\nu)}_{m}(\cos 2d(z,w))\right)^2 f(w) d\mu_n(w).
\end{equation}
\end{definition}

Notice that the kernel of the Berezin transform $B_m^{\nu}$ depends only on the geodesic distance $d_{FS}$ therefore is a $SU(n+1,\mathbb{C})$-biinvariant function. It follows that $B_m^{\nu}$ commutes with the translation operators defined by group elements and in turn, 
it is a function of the Fubini-Study operator $\Delta_{FS}$ (\cite{Ber2}, p.353). In the subsequent section, we determine explicitly this function relying on the spectral theory of $\Delta_{FS}$. 

\section{Variational formula for the Berezin transform}
We seek a function $W = W_m^{\nu}$ depending on $m,\nu$ such that 
\begin{equation*}
B_m^{\nu} = W(-\Delta_{FS}). 
\end{equation*}
Appealing to \eqref{Spect}, the function $W$ should solve the equation
\begin{equation*}
 \frac{c^{\nu,n}_{m}}{P^{(n-1,2\nu)}_{m}(1)} \left(\cos^2d_{FS}(z,w)\right)^{2\nu}\left(P^{(n-1,2\nu)}_{m}(\cos 2d(z,w))\right)^2 = \sum_{k=0}^{\infty}W(\lambda_k)\psi_n(k,z,w)
\end{equation*}
where $\lambda_k := k(k+n)$. 
In order to get a solution to this equation, we need to expand the product of Jacobi polynomials in the left-hand side as a series of Jacobi polynomials $(P_k^{(n-1,0)})_{k\geq 0}$. To proceed, we make use of the following instance of formula (52) in \cite{Sri}, p.4467: for any $\alpha_1, \alpha, \beta > -1$, any $m, \mu \in \mathbb{Z}_+$ and any $t \in [0,1]$, one has:
\begin{align}\label{5.8}
t^{\mu}P_{m}^{(\alpha_1,\beta)}&(1-2t)P_{m}^{(\alpha_1,\beta)}(1-2t)\nonumber\\
&=(\alpha+1)_{\mu}\binom{\alpha_1+m}{m}^2
\,\,\sum_{k=0}^{+\infty}\frac{(\alpha+\beta+2k+1)(-\mu)_k}{(\alpha
+1)_k(\alpha+\beta+k+1)_{\mu+1}}P_k^{(\alpha,\beta)}(1-2t)\nonumber\\
&F^{2:2,2}_{2:1,1}\left[
\begin{matrix}
\mu+1,\ \alpha+\mu+1\ :\hfill -m\ ,\hfill \alpha_1+\beta+m+1\ ,\hfill -m\  
,\hfill \alpha_1+\beta+m+1\hfill \\ 
\mu-k+1\ ,\hfill \alpha+\beta+\mu+2+k\ :\hfill \alpha_1+1\ ,\hfill 
\alpha_1+1
\end{matrix}
\, \middle\vert \,
1,1
\right]
\end{align}
where 
\begin{equation*}
F^{2:2,2}_{2:1,1}\left[
\begin{matrix}
a_1, a_2 :\hfill b_1, b_2, b_3, b_4  \\ 
c_1, c_2 :\hfill d_1, d_2
\end{matrix}
\, \middle\vert \,
x,y
\right] = \sum_{s,l = 0}^{\infty} \frac{(a_1)_{l+s}(a_2)_{l+s}}{(c_1)_{l+s}(c_2)_{l+s}}\frac{(b_1)_l(b_2)_l(b_3)_s(b_4)_s}{(d_1)_l(d_2)_s} \frac{x^ly^s}{l!s!}
\end{equation*}
is the Kamp\'e de F\'eriet function (\cite{Man-Sri}). The issue of our computations is recorded in the following proposition: 
\begin{proposition}
The function $W = W_m^{\nu}$ can be chosen as: 
\begin{multline*}
W(\lambda) =  \gamma_{n,m,\nu}\frac{\Gamma(R_n(\lambda)+1)}{\Gamma(n+R_n(\lambda))\Gamma(2\nu-R_n(\lambda)+1)\Gamma(n+R_n(\lambda)+2\nu+1)} 
\\  \sum_{s=0}^m \frac{(-m)_s(2\nu+1)_s(2\nu+m+n)_s}{s! (2\nu-R_n(\lambda)+1)_{s}(n+2\nu+R_n(\lambda)+1)_{s}} 
{}_4F_3\left[
\begin{matrix}
-m,\ 2\nu+1+s, \, 2\nu+1+s ,2\nu+m+n, 
\\ 2\nu-R_n(\lambda)+1+s, n+2\nu+1+R_n(\lambda)+s, 2\nu+1
\end{matrix}; 1\right].
\end{multline*}
where 
\begin{equation*}
\gamma_{n,m,\nu} := (2m+2\nu+n) (m+2\nu+1)_{n-1} \frac{((2\nu+m)!)^2(n-1)!}{(n)_m m!} 
\end{equation*}
and  
\begin{equation*}
R_n(\lambda) := \frac{\sqrt{n^2+\lambda} - n}{2}, \quad \lambda \geq 0.
\end{equation*}
Consequently, the Berezin transform is given by $B_m^{\nu} = W_m^{\nu}(-\Delta_{FS})$. 
\end{proposition}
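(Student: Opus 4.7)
The plan is to exploit the fact that the Berezin kernel in \eqref{1.7} depends only on $d_{FS}(z,w)$, so that $B_m^{\nu}$ is $SU(n+1,\mathbb{C})$-biinvariant and hence coincides with some function $W(-\Delta_{FS})$ of the Fubini--Study Laplacian. By \eqref{Spect}, the integral kernel of any such operator equals $\sum_{k\geq 0} W(\lambda_k)\,\psi_n(k;z,w)$ with the normalization $\lambda_k=4k(k+n)$ which makes $R_n(\lambda_k)=k$. Since the spectral functions $\psi_n(k;\cdot,w)$ project onto pairwise orthogonal Laplace eigenspaces, the values $W(\lambda_k)$ are fully determined by expanding the Berezin kernel in the orthogonal system $\{P_k^{(n-1,0)}(\cos 2d_{FS})\}_{k\geq 0}$.

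To carry out this expansion I apply Srivastava's product formula \eqref{5.8} in the variable $t=\cos^2 d_{FS}(z,w)$ with parameters $(\alpha,\beta,\alpha_1,\mu)=(0,\,n-1,\,2\nu,\,2\nu)$. The reflection identities $P_m^{(2\nu,n-1)}(-x)=(-1)^m P_m^{(n-1,2\nu)}(x)$ and $P_k^{(0,n-1)}(-x)=(-1)^k P_k^{(n-1,0)}(x)$ turn the left-hand side of \eqref{5.8} into $(\cos^2 d_{FS})^{2\nu}\bigl[P_m^{(n-1,2\nu)}(\cos 2d_{FS})\bigr]^2$, which is the Berezin kernel up to the prefactor $c_m^{\nu,n}/P_m^{(n-1,2\nu)}(1)$, while the Jacobi family appearing on the right-hand side becomes $P_k^{(n-1,0)}(\cos 2d_{FS})$ up to signs $(-1)^k$ that combine with $(-2\nu)_k$.

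The next step is to reduce the Kampé de Fériet function $F^{2:2,2}_{2:1,1}$ that dresses each coefficient. Expanding it as a double series in $(l,s)$ and splitting every Pochhammer of the form $(a)_{l+s}=(a)_s\,(a+s)_l$, the inner $l$-summation collapses to a terminating ${}_4F_3(\,\cdot\,;1)$ with upper parameters $-m,\ 2\nu+1+s,\ 2\nu+1+s,\ 2\nu+m+n$ and lower parameters $2\nu-k+1+s,\ n+2\nu+1+k+s,\ 2\nu+1$, while the outer $s$-sum is truncated at $m$ by the factor $(-m)_s$. This reproduces the double series displayed in the proposition after the identification $k=R_n(\lambda)$.

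Finally I read off $W(\lambda_k)$ by equating coefficients of $P_k^{(n-1,0)}(\cos 2d_{FS})$ on both sides, using $\psi_n(k;z,w)=\bigl((2k+n)\Gamma(n)/\pi^n\bigr)P_k^{(n-1,0)}(\cos 2d_{FS})$, which follows from \eqref{SpecFun} and \eqref{JacVal}. Collecting the constants $c_m^{\nu,n}/P_m^{(n-1,2\nu)}(1)$, the prefactor $(1)_{2\nu}\binom{2\nu+m}{m}^2$ from \eqref{5.8}, and the $k$-dependent ratio $(n+2k)(-2\nu)_k(-1)^k/\bigl[k!(n+k)_{2\nu+1}\bigr]$ assembles the constant $\gamma_{n,m,\nu}$ and the Gamma prefactor of the statement. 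The main obstacle I anticipate is precisely the regularization of this ratio: for $k>2\nu$ the naive Pochhammer representation produces a $0/0$ indeterminacy, and it is only by absorbing the numerator $(-2\nu)_k$ from \eqref{5.8} into the denominator Pochhammers of the Kampé de Fériet via the Gamma identity $1/\bigl[\Gamma(2\nu-k+1)(2\nu-k+1)_s\bigr]=1/\Gamma(2\nu-k+1+s)$ that the coefficient becomes manifestly analytic in $\lambda$, thereby allowing interpolation off the spectrum through the substitution $k=R_n(\lambda)$ and yielding the stated variational formula.
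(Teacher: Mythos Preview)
Your proposal is correct and follows essentially the same route as the paper: the same specialization of Srivastava's linearization \eqref{5.8} with $(\alpha,\beta,\alpha_1,\mu)=(0,n-1,2\nu,2\nu)$, the same Jacobi reflection to pass between $P^{(2\nu,n-1)}$ and $P^{(n-1,2\nu)}$, and the same reduction of the Kamp\'e de F\'eriet double series to a terminating ${}_4F_3$ via $(a)_{l+s}=(a)_s(a+s)_l$. Your final paragraph on the $0/0$ regularization for $k>2\nu$ is in fact more careful than the paper, which asserts that the $k$-sum ``terminates at $2\nu$'' on the strength of $(-2\nu)_k$ alone; as you note, the denominator Pochhammer $(2\nu-k+1)_{l+s}$ also vanishes in that range, and it is only after grouping via $1/[\Gamma(2\nu-k+1)(2\nu-k+1)_{s+l}]=1/\Gamma(2\nu-k+1+s+l)$ that the coefficient becomes a bona fide analytic function of $\lambda$ through $k=R_n(\lambda)$ (and the expansion then correctly runs up to $k=2m+2\nu$, the actual polynomial degree of the kernel in $\cos 2d_{FS}$).
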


\begin{proof} 
Specializing (\ref{5.8}) with
\begin{equation*}
\mu= \alpha_1 = 2\nu, \, \, \beta = n-1, \, \alpha =0, \, t=\cos^2d(z,w), 
\end{equation*}
and using the symmetry relation $P_k^{\alpha,\beta}(-x) = (-1)^k P_k^{\alpha,\beta}(-x)$ (\cite{AAR}), we readily obtain
\begin{align*}
\left(\cos^2d_{FS}(z,w)\right)^{2\nu}\left(P^{(n-1,2\nu)}_{m}(\cos 2d(z,w))\right)^2  &=(1)_{2\nu} \binom{2\nu+m}{m}^2\sum_{k=0}^{\infty}\frac{(2k+n) (-2\nu)_k}{(n)_k(n+k)_{2\nu+1}}(-1)^k P_k^{(n-1,0)}(\cos 2d(z,w))  \nonumber 
\\ & \times F^{2:2,2}_{2:1,1}\left[
\begin{matrix}
2\nu+1,\ 2\nu+1 :\hfill -m\ ,\hfill 2\nu+m+n ,\hfill -m\  
,\hfill 2\nu+m+n\hfill \\ 
2\nu-k+1\ ,\hfill n+2\nu+k+1 :\hfill 2\nu+1\ ,\hfill 2\nu+1
\end{matrix}
\, \middle\vert \,
1,1
\right]\nonumber\\
&=\pi^{n}(1)_{2\nu} \binom{2\nu+m}{m}^2\sum_{k=0}^\infty \frac{(-1)^k(-2\nu)_kk!}{(n)_k(n+k)_{2\nu+1}(n+k-1)!} \psi_n(k;z,w)
\\&\times F^{2:2,2}_{2:1,1}\left[
\begin{matrix}
2\nu+1,\ 2\nu+1 :\hfill -m\ ,\hfill 2\nu+m+n ,\hfill -m\  
,\hfill 2\nu+m+n\hfill \\ 
2\nu-k+1\ ,\hfill n+2\nu+k+1 :\hfill 2\nu+1\ ,\hfill 2\nu+1
\end{matrix}
\, \middle\vert \,
1,1
\right]
\\& =\pi^{n}(1)_{2\nu} \binom{2\nu+m}{m}^2\sum_{k=0}^\infty \frac{(-1)^k(-2\nu)_kk!}{(n)_k\Gamma(n+k+2\nu+1)} \psi_n(k;z,w)
 \nonumber\\
&\times F^{2:2,2}_{2:1,1}\left[
\begin{matrix}
2\nu+1,\ 2\nu+1 :\hfill -m\ ,\hfill 2\nu+m+n ,\hfill -m\  
,\hfill 2\nu+m+n\hfill \\ 
2\nu-k+1\ ,\hfill n+2\nu+k+1 :\hfill 2\nu+1\ ,\hfill 2\nu+1
\end{matrix}
\, \middle\vert \,
1,1
\right].
\end{align*}

Observe that the sum over $k$ terminates at $2\nu$ and that 
\begin{equation*}
(-2\nu)_k = (-1)^k \frac{(2\nu)!}{(2\nu-k)!}. 
\end{equation*}
Similarly, the Kamp\'e de Feriet series terminates at $m$ and using the relation 
\begin{equation*}
(a)_{l+s} = (a+s)_l(a)_s
\end{equation*}
satisfied by the Pochhammer symbol, we derive 
\begin{multline*}
F^{2:2,2}_{2:1,1}\left[
\begin{matrix}
2\nu+1,\ 2\nu+1 :\hfill -m\ ,\hfill 2\nu+m+n ,\hfill -m\  
,\hfill 2\nu+m+n\hfill \\ 
2\nu-k+1\ ,\hfill n+2\nu+k+1 :\hfill 2\nu+1\ ,\hfill 2\nu+1
\end{matrix}
\, \middle\vert \,
1,1
\right] = \sum_{l,s=0}^m \frac{(2\nu+1)_{l+s}(2\nu+1)_{l+s}}{(2\nu-k+1)_{l+s}(n+2\nu+k+1)_{l+s}} \\ \times \frac{(-m)_l(2\nu+m+n)_l(-m)_s(2\nu+m+n)_s}{(2\nu+1)_l(2\nu+1)_s} \frac{1}{l!s!}
= \sum_{s=0}^m \frac{(-m)_s(2\nu+1)_s(2\nu+m+n)_s}{s! (2\nu-k+1)_{s}(n+2\nu+k+1)_{s}} 
\\ \sum_{l=0}^m \frac{(-m)_l ((2\nu+1+s)_{l})^2  (2\nu+m+n)_l}{(2\nu-k+1+s)_{l}(n+2\nu+k+1+s)_{l}(2\nu+1)_l} \frac{1}{l!}
= \sum_{s=0}^m \frac{(-m)_s(2\nu+1)_s(2\nu+m+n)_s}{s! (2\nu-k+1)_{s}(n+2\nu+k+1)_{s}} \\
{}_4F_3\left[
\begin{matrix}
-m,\ 2\nu+1+s, ,\hfill 2\nu+1+s ,2\nu+m+n, 
\\ 2\nu-k+1+s, n+2\nu+1+k+s, 2\nu+1 
\end{matrix}; 1\right]
\end{multline*}
As a result, we should have 
\begin{multline*}
W(k(k+n)) =  \frac{((2\nu)!)^2\pi^n c^{\nu,n}_{m}}{P^{(n-1,2\nu)}_{m}(1)} \binom{2\nu+m}{m}^2 \frac{k!}{(n)_k\Gamma(2\nu-k+1)\Gamma(n+k+2\nu+1)} 
\\  \sum_{s=0}^m \frac{(-m)_s(2\nu+1)_s(2\nu+m+n)_s}{s! (2\nu-k+1)_{s}(n+2\nu+k+1)_{s}} 
{}_4F_3\left[
\begin{matrix}
-m,\ 2\nu+1+s, ,\hfill 2\nu+1+s ,2\nu+m+n, 
\\ 2\nu-k+1+s, n+2\nu+1+k+s, 2\nu+1
\end{matrix}; 1\right].
\end{multline*}
Keeping in mind \eqref{Const} and \eqref{JacVal}, the last expression simplifies as 
\begin{multline*}
W(k(k+n)) =  (2m+2\nu+n) (m+2\nu+1)_{n-1} \frac{((2\nu+m)!)^2}{m!} \frac{k!}{(n)_k(n)_m\Gamma(2\nu-k+1)\Gamma(n+k+2\nu+1)} 
\\  \sum_{s=0}^m \frac{(-m)_s(2\nu+1)_s(2\nu+m+n)_s}{s! (2\nu-k+1)_{s}(n+2\nu+k+1)_{s}} 
{}_4F_3\left[
\begin{matrix}
-m,\ 2\nu+1+s, ,\hfill 2\nu+1+s ,2\nu+m+n, 
\\ 2\nu-k+1+s, n+2\nu+1+k+s, 2\nu+1
\end{matrix}; 1\right].
\end{multline*}
Solving the equation $k(k+n) = \lambda$ in the variable $k \geq 0$ for $\lambda \geq 0$, we are done. 
\end{proof}

When $m=0, n=1$, we retrieve Berezin formula in the case of the Riemann sphere. Indeed, for these parameters, the function $W_m^{\nu}$ reduces to 
\begin{equation*}
W(k(k+1)) =  \gamma_{1,0,\nu} \frac{1}{\Gamma(2\nu-k+1)\Gamma(k+2\nu+2)}. 
\end{equation*}
Now, recall the Weierstrass product for the Gamma function: 
\begin{equation*}
\frac{1}{\Gamma(s+1)} = e^{\gamma s} \prod_{p \geq 1} \left(1+\frac{s}{p}\right)e^{-s/p}
\end{equation*}
where $\gamma$ is the Euler constant. 
It follows that 
\begin{equation*}
W(k(k+n)) =  \gamma_{1,0,\nu}e^{\gamma(4\nu+3)}\prod_{p \geq 1}\left(1+\frac{2\nu-k}{p}\right)\left(1+\frac{2\nu+1+k}{p}\right)e^{-(4\nu+3)/p}. 
\end{equation*}
Writing 
\begin{equation*}
\left(1+\frac{2\nu-k}{p}\right)\left(1+\frac{2\nu+1+k}{p}\right) = \left(1+\frac{2\nu}{p}\right)\left(1+\frac{2\nu+1}{p}\right) \left(1-\frac{k}{p+2\nu}\right)\left(1+\frac{k}{p+2\nu+1}\right)
\end{equation*}
and using again he Weierstrass product 
\begin{equation*}
e^{\gamma(4\nu+3)}\prod_{p \geq 1}\left(1+\frac{2\nu}{p}\right)\left(1+\frac{2\nu+1}{p}\right)e^{-(4\nu+3)/p} = \frac{1}{\Gamma(2\nu+1)\Gamma(2\nu+2)},
\end{equation*}
we get 
\begin{equation*}
W(k(k+1)) =   \frac{\gamma_{1,0,\nu}}{\Gamma(2\nu+1)\Gamma(2\nu+2)}\prod_{p \geq 1} \left(1- \frac{k(k+1)}{(p+2\nu)(p+2\nu+1)}\right). 
\end{equation*}
As a matter of fact, $W$ can be chosen as 
\begin{equation*}
W(\lambda) =   \frac{((2\nu)!)^2(2\nu+1)}{\Gamma(2\nu+1)\Gamma(2\nu+2)}\prod_{p \geq 1} \left(1- \frac{\lambda}{(p+2\nu)(p+2\nu+1)}\right), \quad \lambda \geq 0 
\end{equation*}
so that (we use the identity $\Gamma(s+1) = s\Gamma(s)$)
\begin{equation*}
B_0^{\nu} = W(-\Delta_{FS}) = \prod_{p \geq 1} \left(1 + \frac{\Delta_{FS}}{(p+2\nu)(p+2\nu+1)}\right).
\end{equation*}
Identifying $2\nu$ to $1/h$ in the notation of \cite{Ber1} (see eq. (5.9) p. 171), we get Berezin's formula: 
\begin{equation*}
B_0^{\nu}  = \prod_{p \geq 1} \left(1 + h^2\frac{\Delta_{FS}}{(1+ph)(1+(p+1)h)}\right).
\end{equation*}
A similar formula holds for $m=0$ and general $n \geq 1$.

\end{document}